\newcommand{\diag}{\mathrm{diag}}
\newtheorem{theorem}{Theorem}
\newtheorem{assumption}{Assumption}
\newtheorem{remark}{Remark}
\newtheorem{definition}{Definition}
\newtheorem{proposition}{Proposition}
\def\ct-#1{ct\nobreakdash-#1}
\def\RPI-#1-#2{RPI\nobreakdash-#1\nobreakdash-#2}
\def\SL-#1{SL\nobreakdash-#1}
\def\tube-#1{tube\nobreakdash-#1}
\newcolumntype{C}{>{\centering\arraybackslash}X}
\title{\LARGE \bf
System Level Disturbance Reachable Sets\\and their Application to Tube-based MPC
}
\author{Jerome Sieber$^{1}$, Andrea Zanelli$^{1}$, Samir Bennani$^{2}$, and Melanie N. Zeilinger$^{1}$
\thanks{*This work was supported by the European Space Agency (ESA) under NPI 621-2018 and the Swiss Space Center (SSC).%
}
\thanks{$^{1}$J. Sieber, A. Zanelli, and M. N. Zeilinger are members of the Institute for Dynamic Systems and Control (IDSC), ETH Zurich, 8092 Zurich, Switzerland
        {\tt\small \{jsieber,zanellia,mzeilinger\}@ethz.ch}%
}
\thanks{$^{2}$S. Bennani is a member of ESA-ESTEC, Noordwijk 2201 AZ, The Netherlands
        {\tt\small samir.bennani@esa.int}%
}
}
\begin{document}

\maketitle
\thispagestyle{empty}
\pagestyle{empty}

\begin{abstract}
Tube-based model predictive control~(MPC) methods leverage tubes to bound deviations from a nominal trajectory due to uncertainties in order to ensure constraint satisfaction. This paper presents a novel \tube-based MPC formulation based on system level disturbance reachable sets~(\SL-DRS), which leverage the affine system level parameterization~(SLP). We show that imposing a finite impulse response~(FIR) constraint on the affine SLP guarantees containment of all future deviations in a finite sequence of \SL-DRS. This allows us to formulate a system level \tube-MPC~(SLTMPC) method using the \SL-DRS as tubes, which enables concurrent optimization of the nominal trajectory and the tubes, while using a positively invariant terminal set. Finally, we show that the \SL-DRS tubes can also be computed offline.
\end{abstract}

\begin{keywords}
Predictive control for linear systems, Optimal control, Robust control
\end{keywords}

\section{INTRODUCTION}
Tube-based model predictive control~(MPC) methods have become the principal control techniques for robust control of constrained linear systems with bounded additive disturbances and bounded parametric model uncertainties. The main concept is to separate the system behavior into nominal and error dynamics, which describe the desired nominal system behavior neglecting the uncertainties and the deviations from this nominal behavior due to the uncertainties, respectively. Constraint satisfaction in spite of the present uncertainties is achieved by tightening the constraints using a worst case bound on the error trajectories, also referred to as tubes. The earliest proposed tube parameterizations use a constant offline-computed tube controller and define tubes with constant cross-sections~\cite{Langson2004,Mayne2005} or growing cross-sections~\cite{Chisci2001,Fleming2014}. These parameterizations were then extended to homothetic and elastic tubes~\cite{Rakovic2012b,Rakovic2016}, whose cross-sections have a fixed shape but can be dilated and translated. The most general tube parameterization is introduced in~\cite{Rakovic2012a}, which uses tubes that are parameterized by the convex hulls of the predicted state and input trajectories. Another class of parameterizations implicitly parameterize the tubes through the optimization of feedback gains, which are either defined as affine in the predicted states~\cite{Lofberg2003} or as affine in the past disturbances~\cite{Goulart2006,Sieber2021}.

In this paper, we derive a novel tube parameterization based on the affine system level parameterization~(SLP)~\cite{Sieber2021}, which is an extended version of the SLP that was introduced as part of the system level synthesis framework~\cite{Anderson2019}. The affine SLP offers the advantage that it enables optimization over closed-loop trajectories instead of controller gains. We use this to parameterize a sequence of tubes in terms of the closed-loop error trajectories, enabling concurrent optimization of the nominal trajectory and the error trajectories and therefore the tubes. Subsequently, we use these SLP-based tubes to formulate a \tube-based MPC formulation. The first SLP-based MPC formulations for additive disturbances and parametric model uncertainties were introduced in~\cite{Anderson2019} and~\cite{Chen2020}, respectively. A more conservative variant of the latter approach was already introduced for the constrained linear quadratic regulator (LQR) in~\cite{Dean2018}. Due to the separable structure of the SLP, SLP-based MPC formulations have been mainly proposed for distributed systems~\cite{Chen2019,Alonso2019,Alonso2020,Li2021}. More recently, a direct correspondence between \tube-based MPC and SLP-based MPC formulations was established in~\cite{Sieber2021}.

\textit{Contributions:} We consider linear time-invariant dynamical systems with additive uncertainties. For this class of systems, we propose a variant of disturbance reachable sets~(DRS) derived from the affine SLP, which we call system level DRS~(\SL-DRS). Subsequently, we show that imposing a finite impulse response~(FIR) constraint on the system responses defining the \SL-DRS, results in the containment of all possible infinite horizon error trajectories in a finite sequence of \SL-DRS, which allows us to formulate a \tube-based MPC method using the \SL-DRS as tubes. The proposed method enables concurrent optimization of the nominal trajectory and the tubes, while using a positively invariant terminal set. Finally, we show that these tubes can also be computed offline.

\textit{Related Work:} In the context of min-max MPC,~\cite{Scokaert1998} introduced the idea of defining invariant sets through linear FIR systems, i.e., systems which converge to the origin in a finite number of time steps. This is achieved by computing a constant feedback controller which renders the closed-loop system FIR. In contrast, the proposed \SL-DRS use an implicitly defined set of controllers instead of a single one and explicitly enforce a specific FIR horizon instead of only requiring a finite one. A \tube-based MPC method parameterizing the tubes as a linear combination of multiple precomputed tubes was introduced in~\cite{Kogel2020}. Instead of fixing the tubes a~priori, the proposed \SL-DRS directly use optimized controller gains and guarantee that all infinite horizon error trajectories lie in a finite sequence of \SL-DRS. This work is closest related to~\cite{Sieber2021}, which introduced the system level \tube-MPC~(SLTMPC) method that forms the basis of the \tube-based MPC method presented in this paper. However, the SLTMPC method makes no use of an explicit tube formulation and requires a robust positively invariant terminal set, while the presented method uses the \SL-DRS and only requires a positively invariant terminal set.

The remainder of the paper is organized as follows: Section~\ref{sec:preliminaries} introduces the notation, basic definitions, and the concepts of \tube-based MPC and the affine SLP. We propose the FIR-constrained \SL-DRS in Section~\ref{sec:SL-DRS}, before formulating two \tube-based MPC methods using the \SL-DRS as tubes in Section~\ref{sec:SLTMPC}. Finally, Section~\ref{sec:numerical_section} presents numerical experiments and Section~\ref{sec:conclusions} concludes the paper.

\section{PRELIMINARIES}\label{sec:preliminaries}
\subsection{Notation and Basic Definitions}
In this paper, we denote vectors with lower case letters, e.g.,~$a$, and matrices with upper case letters, e.g.,~$A$. With bold letters we indicate stacked column vectors, e.g.,~$\mathbf{a}$, and block matrices, e.g.,~$\mathbf{A}$, whose elements we index with superscripts, e.g.,~$a^i, \ A^{i,j}$. For two vectors~$x, \, y$, we use the shorthand~$(x, y)$ to denote $[x^\top, y^\top]^\top$ and we frequently consider block-lower-triangular Toeplitz matrices, which are defined by the structure
\begin{equation*}
\mathbf{A} = \begin{bmatrix} A^{0} & & & \\ A^{1} & A^{0} & & \\ \vdots & \ddots & \ddots & \\ A^{N} & \dots & A^1 & A^{0} \end{bmatrix}\!.
\end{equation*}
Additionally, we use~$\diag(A)^i$ to denote the block-diagonal matrix, whose diagonal consists of $i$-times the $A$ matrix. We distinguish between the states of a dynamical system~$x(t)$ and the states predicted by an MPC algorithm~$x_i$. We use~$\oplus$ and~$\ominus$ to denote Minkowski set addition and Pontryagin set subtraction, respectively, which are both defined in~\cite[Definition~3.10]{Rawlings2009}, and for a sequence of sets~$\mathcal{A}_0, \dots, \mathcal{A}_j$, we use the convention that~$\bigoplus_{i=0}^{-1} \mathcal{A}_i = \{0\}$, which is the set containing only the zero vector. Below, we state the definitions of set invariance~\cite{Blanchini99}, disturbance reachable sets~(DRS)~\cite{Kouvaritakis2016}, and input-to-state stability~(ISS)~\cite{Limon2009}, which we will use throughout the paper.
\begin{definition}[Positively invariant set]
A set $\mathcal{S} \subseteq \mathbb{R}^n$ is a positively invariant (PI) set for the system $x^+ = Ax, \, A \in \mathbb{R}^{n \times n}$, if $x^+ \in \mathcal{S}$ for all $x \in \mathcal{S}$.
\end{definition}
\begin{definition}[Robust positively invariant set]
A set $\mathcal{S} \subseteq \mathbb{R}^n$ is a robust positively invariant (RPI) set for the system $x^+ = Ax + w, \, A \in \mathbb{R}^{n \times n}$, if $x^+ \in \mathcal{S}$ for all $x \in \mathcal{S}$, $w \in \mathcal{W} \subseteq \mathbb{R}^n$.
\end{definition}
\begin{definition}[Disturbance reachable sets]
For the system $x^+ = Ax + w, \, A \in \mathbb{R}^{n \times n},\, w \in \mathcal{W} \subseteq \mathbb{R}^n$, where~$\mathcal{W}$ is a compact set containing the origin in its interior, the disturbance reachable sets~(DRS) are defined as~$\mathcal{S}_i \coloneqq \bigoplus_{j=0}^{i-1}A^j\mathcal{W}, \, i \geq 0$.
\end{definition}
\begin{definition}[$\mathcal{K}$- and $\mathcal{K}_\infty$-functions]\label{def:DRS}
A function~$\alpha: \mathbb{R}_{\geq 0} \to \mathbb{R}_{\geq 0}$ is called a $\mathcal{K}$-function, if it is continuous, strictly increasing, and~$\alpha(0) = 0$. If the $\mathcal{K}$-function~$\alpha$ also fulfills~$\alpha(s) \to \infty$ as~$s \to \infty$, then we call~$\alpha$ a $\mathcal{K}_\infty$-function.
\end{definition}
\begin{definition}[ISS Lyapunov function]\label{def:ISS-Lyap}
Consider an RPI set~$\mathcal{S} \subseteq \mathbb{R}^n$ with the origin in its interior. A function $V: \mathbb{R}^n \to \mathbb{R}_{\geq 0}$ is called an ISS Lyapunov function in~$\mathcal{S}$ for the system $x^+ = Ax + w, \, A \in \mathbb{R}^{n \times n}$, with respect to~${w \in \mathcal{W} \subseteq \mathbb{R}^n}$, if there exist suitable $\mathcal{K}_\infty$-functions~$\alpha_1, \alpha_2, \alpha_3$ and a $\mathcal{K}$-function~$\gamma$, such that:
\begin{subequations}
\begin{align}
&\hspace*{-0.08cm}\alpha_1(\|x\|) \leq V(x) \leq \alpha_2(\|x\|), \ \forall x \in \mathcal{S}, \label{ISS-L:bounds}\\
&\hspace*{-0.08cm}V(x^+) \!-\! V(x) \leq \!-\alpha_3(\|x\|) \!+\! \gamma(\|w\|), \, \forall x \!\in\! \mathcal{S},\, w \!\in\! \mathcal{W}, \label{ISS-L:decrease}
\end{align}
\end{subequations}
where~$\| \cdot \|$ is any vector norm on~$\mathbb{R}^n$.
\end{definition}

\subsection{Tube-based MPC}
We consider linear time-invariant (LTI) dynamical systems with additive disturbances of the form
\begin{equation}\label{eq:dynamics}
x(t\!+\!1) = A x(t) + B u(t) + w(t),
\end{equation}
with $A \in \mathbb{R}^{n \times n}$, $B \in \mathbb{R}^{n \times m}$, and $w(t) \in \mathcal{W}$, where $\mathcal{W} = \{ w \!\mid\! H_w w \leq h_w \} \subseteq \mathbb{R}^n$ is a compact set containing the origin in its interior. The system is subject to compact polytopic state and input constraints
\begin{subequations}\label{eq:constraints}
\begin{align}
\mathcal{X} &= \{ x \in \mathbb{R}^n \mid H_{x} x \leq h_{x}, \, H_{x} \in \mathbb{R}^{n_x \times n}\}, \\
\mathcal{U} &= \{ u \in \mathbb{R}^m \mid H_{u} u \leq h_{u}, \, H_{u} \in \mathbb{R}^{n_u \times m}\},
\end{align}
\end{subequations}
containing the origin in their interior. To control system~\eqref{eq:dynamics}, we consider MPC with horizon length~$N$. Hence, we define $\tilde{\mathbf{x}} = (1, \mathbf{x})$ and $\tilde{\bm{\delta}} = (1, \bm{\delta})$, where $\mathbf{x} = ( x_0, x_1, \dots, x_N )$ and $\bm{\delta} = ( x_0, \mathbf{w} ) = ( x_0,w_0, w_1, \dots, w_{N-1} )$, as the affine state and disturbance trajectories, respectively, and $\mathbf{u} = ( u_0, u_1, \dots, u_N )$ as the input trajectory\footnote{Note that we include the input~$u_N$ here to simplify the notation for the SLP. However, this input will be ignored in the MPC formulations.}. The dynamics can then be compactly written in terms of these trajectories:
\begin{equation}\label{eq:stacked_dynamics}
\tilde{\mathbf{x}} = \mathcal{ZA}\tilde{\mathbf{x}} + \mathcal{ZB}\mathbf{u} + \tilde{\bm{\delta}},
\end{equation}
where $\mathcal{Z}$ is the down-shift operator and $\mathcal{A}$, $\mathcal{B}$ are the matrices
\begin{align*}
\mathcal{A} = \begin{bmatrix} 0 & \\  & \diag(A)^{N+1} \end{bmatrix},\quad
\mathcal{B} = \begin{bmatrix} 0 \\ \diag(B)^{N+1} \end{bmatrix}.
\end{align*}
In this paper, we focus on \tube-based MPC methods~\cite{Rawlings2009} for the uncertain system~\eqref{eq:dynamics} subject to~\eqref{eq:constraints}. These methods define the control policy as
\begin{align}
\mathbf{u}^\textup{tube} = \begin{bmatrix} \mathbf{v} - \mathbf{Kz} & \mathbf{K} \end{bmatrix}\tilde{\mathbf{x}} &=  \mathbf{K}\left(\mathbf{x}-\mathbf{z}\right) + \mathbf{v},
\end{align}
where $\mathbf{z} = ( z_0, \dots, z_{N} )$, $\mathbf{v} = ( v_0, \dots, v_{N} )$, and $\mathbf{K} = \diag(K)^{N+1}$ are the nominal states, nominal inputs, and tube controller, respectively. Due to the linearity of the system, dynamics~\eqref{eq:stacked_dynamics} can be split into nominal and error dynamics, i.e.,
\begin{align}
\mathbf{z} &= \mathcal{ZA}\mathbf{z} + \mathcal{ZB}\mathbf{v}, \label{eq:nominal-dynamics}
 \\
\mathbf{x} - \mathbf{z} &= \mathbf{e} = \left(\mathcal{ZA} + \mathcal{ZB}\mathbf{K}\right)\mathbf{e} + \bm{\delta}. \label{eq:error-dynamics}
\end{align}
This allows us to recast the \tube-based MPC problem in the nominal variables $\mathbf{z},\, \mathbf{v}$ instead of the system variables $\mathbf{x},\, \mathbf{u}$, while imposing tightened state and input constraints on the nominal variables via bounds on the error dynamics, i.e., the tubes. The resulting \tube-based MPC problem is given by
\begin{subequations}\label{MPC:generic}
	\begin{alignat}{2}
		\min_{\mathbf{z}, \mathbf{v}} \quad & l_f(z_N) + \sum_{i=0}^{N-1}l(z_i,v_i), \label{MPC:cost}\\
		\textrm{s.t. } \:\; & \mathbf{z} = \mathcal{ZA}\mathbf{z} + \mathcal{ZB}\mathbf{v} \\
		& z_i \in \mathcal{X} \ominus \mathcal{F}_i, && i=0, \dots N-1,\\
		& v_i \in \mathcal{U} \ominus K\mathcal{F}_i, && i=0, \dots N-1, \\
		& z_N \in \mathcal{Z}_f, \\
        & z_0 \in \mathcal{Z}_0
	\end{alignat}
\end{subequations}
where $l(\cdot,\cdot)$ and $l_f(\cdot)$ are suitable stage cost and terminal cost functions, respectively, and~$K$ is the tube controller. The terminal set~$\mathcal{Z}_f$, the sequence of tightening sets~$\mathcal{F}_i$, and the initial state constraint~$z_0 \in \mathcal{Z}_0$ depend on the specific \tube-based MPC method used. The two most widely used methods are constraint tightening MPC~(\ct-MPC)~\cite{Chisci2001} and \RPI-tube-MPC~\cite{Mayne2005}, which both fix the tube controller~$K$ before solving~\eqref{MPC:generic}, but differ in the definitions of~$\mathcal{Z}_f$, $\mathcal{F}_i$, and $\mathcal{Z}_0$.

\emph{Constraint tightening MPC~\cite{Chisci2001}} defines the tightening sets as the DRS for error system~\eqref{eq:error-dynamics} under tube controller~$K$, i.e.,~$\mathcal{F}_i = \bigoplus_{j=0}^{i-1}\left( A + BK\right)^j\mathcal{W}$. The terminal set is then selected as~$\mathcal{Z}_f = \mathcal{X}_f \ominus \mathcal{F}_N$, where~$\mathcal{X}_f$ is an RPI set under tube controller~$K$. By convention, the initial tightening set~$\mathcal{F}_0$ only contains the zero vector, therefore the initial state is not subject to tightening and the MPC problem~\eqref{MPC:generic} is initialized with the measured state, i.e.,~$\mathcal{Z}_0 = \{x(t)\}$.

\emph{RPI-tube-MPC~\cite{Mayne2005}} defines the tightening sets as~$\mathcal{F}_i = \Omega, \, \forall i = 0, \dots, N-1$, where $\Omega = \bigoplus_{j=0}^\infty\left( A + BK\right)^j\mathcal{W}$ is the minimal RPI set for error system~\eqref{eq:error-dynamics} under tube controller~$K$. The terminal set is required to lie in the tightened state constraints, i.e.,~$\mathcal{Z}_f \subseteq \mathcal{X} \ominus \Omega$, where~$\mathcal{Z}_f$ is defined as a PI set, rather than an RPI set, since the error state is guaranteed to lie in $\Omega$ for all time steps. In order to reduce conservativeness, \RPI-tube-MPC optimizes over the initial state~$z_0$, i.e., the initial state constraint becomes~$z_0 \in \mathcal{Z}_0 = \{x(t)\} \oplus \Omega$.

While we focus on these two formulations for comparison, there exist many other variants, see, e.g.,~\cite{Rawlings2009,Kouvaritakis2016} for a more detailed treatment of the above formulations and a broader overview. Note that the DRS converge to the minimal RPI set in the limit with respect to the Hausdorff distance. In Section~\ref{sec:SL-DRS}, we propose a modified version of the DRS based on the affine system level parameterization~(SLP)~\cite{Sieber2021}.

\subsection{Affine System Level Parameterization}
Instead of parameterizing the nominal and error dynamics~\eqref{eq:nominal-dynamics},~\eqref{eq:error-dynamics} in terms of the tube controller~$\mathbf{K}$, the affine SLP parameterizes~\eqref{eq:nominal-dynamics},~\eqref{eq:error-dynamics} in terms of the closed-loop system. To formulate the closed-loop dynamics, we define an affine linear time-varying state feedback controller $\mathbf{u} = \mathbf{K}\tilde{\mathbf{x}}$, with
\begin{equation*}
\mathbf{K} = \begin{bmatrix} v^0 & K^{0,0} & & & \\ v^1 & K^{1,1} & K^{1,0} &  &  \\ \vdots & \vdots & \vdots & \ddots & \\ v^N & K^{N,N} & K^{N,N-1} & \dots & K^{N,0} \end{bmatrix},
\end{equation*}
resulting in the closed-loop trajectory $$\tilde{\mathbf{x}} = \left(\mathcal{ZA} + \mathcal{ZB}\mathbf{K}\right)\tilde{\mathbf{x}} + \tilde{\bm{\delta}} = \left(I - \mathcal{ZA} - \mathcal{ZB}\mathbf{K}\right)^{-1}\tilde{\bm{\delta}}.$$ Then, the \emph{affine system responses} are denoted by~$\bm{\Phi}_\mathbf{x}$ and~$\bm{\Phi}_\mathbf{u}$ describing the linear maps~$\tilde{\mathbf{x}} = \bm{\Phi}_\mathbf{x}\tilde{\bm{\delta}}$ and $\mathbf{u} = \bm{\Phi}_\mathbf{u}\tilde{\bm{\delta}}$, respectively, with~$\bm{\Phi}_\mathbf{x}, \bm{\Phi}_\mathbf{u}$ defined as
\begin{subequations}\label{eq:A-SLP}
\begin{align}
\bm{\Phi}_\mathbf{x} &= \left(I - \mathcal{ZA} - \mathcal{ZB}\mathbf{K}\right)^{-1}\!, \\ 
\bm{\Phi}_\mathbf{u} &= \mathbf{K}\left(I - \mathcal{ZA} - \mathcal{ZB}\mathbf{K}\right)^{-1}\!.
\end{align}
\end{subequations}
The affine system responses~$\bm{\Phi}_\mathbf{x}$,\, $\bm{\Phi}_\mathbf{u}$ completely define the behavior of the closed-loop system with affine state feedback~$\mathbf{K}$ and admit the following block-lower-triangular structure 
\begin{equation}\label{eq:SLP_structure}
\bm{\Phi}_\mathbf{x} = \begin{bmatrix} 1 & 0 \\ \bm{\phi}_\mathbf{z} & \bm{\Phi}_\mathbf{e}\end{bmatrix}\!, \quad \bm{\Phi}_\mathbf{u} = \begin{bmatrix} \bm{\phi}_\mathbf{v} & \bm{\Phi}_\mathbf{k}\end{bmatrix}\!,
\end{equation}
where~$\bm{\phi}_\mathbf{z} \in \mathbb{R}^{(N+1)n}, \ \bm{\phi}_\mathbf{v} \in \mathbb{R}^{(N+1)m}$ denote the nominal system responses and~$\bm{\Phi}_\mathbf{e} \in \mathbb{R}^{(N+1)n\times (N+1)n}, \ \bm{\Phi}_\mathbf{k} \in \mathbb{R}^{(N+1)m\times (N+1)n}$ are block-lower-triangular matrices denoting the error system responses. Then, we can reparameterize~\eqref{eq:nominal-dynamics},~\eqref{eq:error-dynamics} in terms of the system responses~$\bm{\Phi}_\mathbf{x}, \bm{\Phi}_\mathbf{u}$ instead of the affine state feedback gain $\mathbf{K}$ by exploiting the following theorem.
\begin{theorem}{(Theorem 3 in \cite{Sieber2021})}\label{theorem:A-SLP}
Consider the system dynamics~\eqref{eq:stacked_dynamics} over a horizon $N$ with block-lower-triangular state feedback law~$\mathbf{K}$ defining the control action as $\mathbf{u} = \mathbf{K}\tilde{\mathbf{x}}$. The following statements are true:
\begin{enumerate}
	\item the affine subspace defined by
	\begin{equation}\label{affine_SLS:affine-halfspace}\begin{bmatrix} I - \mathcal{Z}\mathcal{A} & -\mathcal{Z}\mathcal{B}\end{bmatrix} \begin{bmatrix} \bm{\Phi}_\mathbf{x} \\ \bm{\Phi}_\mathbf{u} \end{bmatrix} = I, \end{equation} parameterizes all possible affine system responses $\bm{\Phi}_\mathbf{x}$,~$\bm{\Phi}_\mathbf{u}$ with structure~\eqref{eq:SLP_structure},
	\item for any block-lower-triangular matrices $\bm{\Phi}_\mathbf{x}$, $\bm{\Phi}_\mathbf{u}$ satisfying~\eqref{affine_SLS:affine-halfspace}, the controller $\mathbf{K} = \bm{\Phi}_\mathbf{u} \bm{\Phi}_\mathbf{x}^{-1}$ achieves the desired affine system responses.
\end{enumerate}
\end{theorem}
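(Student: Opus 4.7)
The plan is to prove both parts by direct algebraic manipulation, exploiting the strict block-lower-triangularity of $\mathcal{Z}\mathcal{A}$ and $\mathcal{Z}\mathcal{B}$ induced by the down-shift operator $\mathcal{Z}$. The global strategy mirrors the original system level synthesis parameterization argument, lifted to the affine setting.

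For part (1), I will first verify the inclusion: given any block-lower-triangular affine feedback $\mathbf{K}$, substituting the definitions~\eqref{eq:A-SLP} into the left-hand side of~\eqref{affine_SLS:affine-halfspace} yields
\begin{equation*}
\begin{bmatrix} I - \mathcal{Z}\mathcal{A} & -\mathcal{Z}\mathcal{B}\end{bmatrix}\begin{bmatrix} \bm{\Phi}_\mathbf{x} \\ \bm{\Phi}_\mathbf{u}\end{bmatrix} = \left(I - \mathcal{Z}\mathcal{A} - \mathcal{Z}\mathcal{B}\mathbf{K}\right)\left(I - \mathcal{Z}\mathcal{A} - \mathcal{Z}\mathcal{B}\mathbf{K}\right)^{-1} = I,
\end{equation*}
so every closed-loop pair lives in the affine subspace. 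The invertibility of $I - \mathcal{Z}\mathcal{A} - \mathcal{Z}\mathcal{B}\mathbf{K}$ is automatic because $\mathcal{Z}\mathcal{A} + \mathcal{Z}\mathcal{B}\mathbf{K}$ is strictly block-lower-triangular, hence nilpotent on the horizon $N+1$, so its Neumann series terminates after finitely many terms and returns a block-lower-triangular inverse. Reading off this finite expansion I can also certify the structural claim~\eqref{eq:SLP_structure}: the leading $1$ and the diagonal identity blocks of $\bm{\Phi}_\mathbf{x}$ are inherited from the identity term of the series, and restricting $\tilde{\bm{\delta}}$ to its constant component identifies $(1, \bm{\phi}_\mathbf{z})$ and $\bm{\phi}_\mathbf{v}$ as the nominal responses.

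For part (2), the key observation is that any $\bm{\Phi}_\mathbf{x}$ satisfying~\eqref{affine_SLS:affine-halfspace} is invertible. Comparing block-diagonals in~\eqref{affine_SLS:affine-halfspace} and using that $\mathcal{Z}\mathcal{A}$, $\mathcal{Z}\mathcal{B}$ have zero block-diagonal forces the block-diagonal of $\bm{\Phi}_\mathbf{x}$ to be the identity, which together with the lower-triangular form~\eqref{eq:SLP_structure} yields invertibility and ensures $\bm{\Phi}_\mathbf{x}^{-1}$ is itself block-lower-triangular. Defining $\mathbf{K} = \bm{\Phi}_\mathbf{u}\bm{\Phi}_\mathbf{x}^{-1}$ therefore produces a well-posed block-lower-triangular feedback, and multiplying~\eqref{affine_SLS:affine-halfspace} from the right by $\bm{\Phi}_\mathbf{x}^{-1}$ gives $I - \mathcal{Z}\mathcal{A} - \mathcal{Z}\mathcal{B}\mathbf{K} = \bm{\Phi}_\mathbf{x}^{-1}$, from which $\bm{\Phi}_\mathbf{x} = (I - \mathcal{Z}\mathcal{A} - \mathcal{Z}\mathcal{B}\mathbf{K})^{-1}$ and $\bm{\Phi}_\mathbf{u} = \mathbf{K}\bm{\Phi}_\mathbf{x}$ follow, matching the closed-loop definitions in~\eqref{eq:A-SLP}.

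The main obstacle I anticipate is the bookkeeping around the affine augmentation: the extra row and column carrying the constant $1$ in $\tilde{\mathbf{x}}$ and $\tilde{\bm{\delta}}$ must be kept consistent with the block-lower-triangular pattern declared in~\eqref{eq:SLP_structure}, and I need to rule out that the reconstruction $\mathbf{K} = \bm{\Phi}_\mathbf{u}\bm{\Phi}_\mathbf{x}^{-1}$ populates any forbidden upper-triangular entries, in particular that its first column reproduces the feed-forward terms $(v^0, \ldots, v^N)$ and not spurious couplings to the state. Once this affine structure is handled cleanly, the remainder is routine linear algebra essentially identical to the standard system level synthesis proof, so I expect the affine bookkeeping to be the only genuinely new step.
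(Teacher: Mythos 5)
Your argument is correct and is the standard system level synthesis parameterization proof, correctly lifted to the affine setting: the direct substitution for part~(1), nilpotency of the strictly block-lower-triangular $\mathcal{Z}\mathcal{A}+\mathcal{Z}\mathcal{B}\mathbf{K}$ for invertibility, and the identity block-diagonal of $\bm{\Phi}_\mathbf{x}$ for part~(2) are exactly the needed ingredients, and the block inverse $\bm{\Phi}_\mathbf{x}^{-1} = \bigl[\begin{smallmatrix}1 & 0\\ -\bm{\Phi}_\mathbf{e}^{-1}\bm{\phi}_\mathbf{z} & \bm{\Phi}_\mathbf{e}^{-1}\end{smallmatrix}\bigr]$ confirms that $\mathbf{K}=\bm{\Phi}_\mathbf{u}\bm{\Phi}_\mathbf{x}^{-1}$ has precisely the declared affine structure with first column $\bm{\phi}_\mathbf{v}-\bm{\Phi}_\mathbf{k}\bm{\Phi}_\mathbf{e}^{-1}\bm{\phi}_\mathbf{z}$. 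Note that this paper states the theorem without proof, citing Theorem~3 of~\cite{Sieber2021}; your reconstruction follows the same route as that reference.
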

In this paper, we restrict~$\bm{\Phi}_\mathbf{e}, \bm{\Phi}_\mathbf{k}$ to block-lower-triangular Toeplitz matrices for simplicity, i.e., the affine system responses take the form
\begin{equation*}\label{eq:mb_structure}
\bm{\Phi}_\mathbf{x} \!=\!\!\begin{bmatrix} \hspace{-.3em}1 & & & \\ \hspace{-.1em}\phi_z^0 & \hspace{-.1em}\Phi_e^0 & & \\ \hspace{-.1em}\vdots & \hspace{-.1em}\vdots & \hspace{-.3em}\ddots & \\ \hspace{-.1em}\phi_z^N & \hspace{-.1em}\Phi_e^N & \hspace{-.3em}\dots & \hspace{-.3em}\Phi_e^0 \end{bmatrix}\!\!, \;
\bm{\Phi}_\mathbf{u} \!=\!\!\begin{bmatrix} \hspace{-.1em}\phi_v^0 & \hspace{-.1em}\Phi_k^0 &  &  \\ \hspace{-.1em}\vdots & \hspace{-.1em}\vdots & \hspace{-.3em}\ddots &  \\ \hspace{-.1em}\phi_v^N & \hspace{-.1em}\Phi_k^N & \hspace{-.3em}\dots & \hspace{-.3em}\Phi_k^0 \end{bmatrix}\!\!.
\end{equation*}
However, all presented results can be extended to~$\bm{\Phi}_\mathbf{e}, \bm{\Phi}_\mathbf{k}$ with a general block-lower-triangular structure.

\section{SYSTEM LEVEL DISTURBANCE REACHABLE SETS}\label{sec:SL-DRS}
In this section, we use the system level perspective on the nominal and error dynamics to formulate a variant of the DRS called system level DRS~(\SL-DRS), which are completely defined by the error system responses. We show that imposing a finite impulse response~(FIR) constraint on the error system responses, ensures containment of all error trajectories over an infinite horizon in a finite sequence of \SL-DRS.

Consider the state and input trajectories parameterized by the affine system responses, i.e.,~$\mathbf{x} = \bm{\phi}_\mathbf{z} + \bm{\Phi}_\mathbf{e}^\mathbf{0} x_0 + \bm{\Phi}_\mathbf{e}^\mathbf{w} \mathbf{w}$ and $\mathbf{u} = \bm{\phi}_\mathbf{v} + \bm{\Phi}_\mathbf{k}^\mathbf{0} x_0 + \bm{\Phi}_\mathbf{k}^\mathbf{w} \mathbf{w}$, where~$\bm{\Phi}_\mathbf{e/k}^\mathbf{0}, \bm{\Phi}_\mathbf{e/k}^\mathbf{w}$ are partitions of the error system responses such that~$\bm{\Phi}_\mathbf{e} = \begin{bmatrix} \bm{\Phi}_\mathbf{e}^\mathbf{0} & \bm{\Phi}_\mathbf{e}^\mathbf{w} \end{bmatrix}, \, \bm{\Phi}_\mathbf{k} = \begin{bmatrix} \bm{\Phi}_\mathbf{k}^\mathbf{0} & \bm{\Phi}_\mathbf{k}^\mathbf{w} \end{bmatrix}$. Then, we separate the nominal and error trajectories like
\begin{alignat*}{2}
\mathbf{z} &= \bm{\phi}_\mathbf{z} + \bm{\Phi}_\mathbf{e}^\mathbf{0}x_0, \quad &&\mathbf{v} = \bm{\phi}_\mathbf{v} + \bm{\Phi}_\mathbf{k}^\mathbf{0}x_0, \\
\mathbf{e} &= \bm{\Phi}_\mathbf{e}^\mathbf{w}\mathbf{w}, \quad &&\:\!\mathbf{k} = \bm{\Phi}_\mathbf{k}^\mathbf{w}\mathbf{w}.
\end{alignat*}
Note that~\eqref{affine_SLS:affine-halfspace} enforces~$\phi_z^0 = 0$ and~$\Phi_e^0 = I$, therefore the error state trajectory~$\mathbf{e}$ is initialized at the origin, i.e.,~$e_0 = 0$.
The trajectories~$\mathbf{e}, \, \mathbf{k}$ parameterize the error states and inputs over the horizon~$N$, however we are interested in analyzing the error dynamics over an infinite horizon. To achieve this, we introduce a finite impulse response~(FIR) constraint on the error system responses~$\bm{\Phi}_\mathbf{e}, \, \bm{\Phi}_\mathbf{k}$.
\begin{definition}[FIR Constraint]
We call the error system responses~$\bm{\Phi}_\mathbf{e}, \, \bm{\Phi}_\mathbf{k}$ FIR-constrained if
\begin{equation}\label{eq:FIR-constraint}
\Phi_e^N = \bm{0}, \quad \Phi_k^N = \bm{0},
\end{equation}
where~$N$ is the FIR horizon.
\end{definition}
Assuming that~$\bm{\Phi}_\mathbf{e}, \, \bm{\Phi}_\mathbf{k}$ are FIR-constrained, we can rewrite the error dynamics~\eqref{eq:error-dynamics} as an infinite sequence starting in~$e_0 = 0$, with respect to the error system responses and in convolutional form:
\begin{subequations}\label{SLDRS:error-system}
\begin{align}
        e_i &= \left\lbrace\begin{matrix} \sum_{j=0}^{i-1} \Phi_e^jw_{i-1-j}, & \text{if } i = 0, \dots, N, \\[0.3em] \sum_{j=0}^{N} \Phi_e^jw_{i-1-j}, & \text{if } i > N, \end{matrix}\right. \\
        k_i &= \left\lbrace\begin{matrix} \sum_{j=0}^{i-1} \Phi_k^jw_{i-1-j}, & \text{if } i = 0, \dots, N, \\[0.3em] \sum_{j=0}^{N} \Phi_k^jw_{i-1-j}, & \text{if } i > N. \end{matrix}\right.
\end{align}
\end{subequations}
Given these error trajectories, we first define the \SL-DRS as the outer bounding sets containing all possible~$e_i, \, k_i$ for~$i=0,\dots,N$, before showing that the infinite trajectories~\eqref{SLDRS:error-system} are always contained in a finite sequence of \SL-DRS.
\begin{definition}[\SL-DRS]\label{def:SL-DRS}
        Given the FIR-constrained error system responses~$\bm{\Phi}_\mathbf{e}, \, \bm{\Phi}_\mathbf{k}$, the system level disturbance reachable sets~(\SL-DRS) are defined as the two sequences~$\{\mathcal{F}_{e,0}, \dots, \mathcal{F}_{e,N}\}$ and~$\{\mathcal{F}_{k,0}, \dots, \mathcal{F}_{k,N}\}$:
        \begin{subequations}\label{eq:SL-DRS}
        \begin{align}
        \mathcal{F}_{e,i}\left(\bm{\Phi}_\mathbf{e}\right) &\coloneqq \bigoplus_{j=0}^{i-1} \Phi_{e}^j \mathcal{W}, \qquad i = 0, \dots, N, \\
        \mathcal{F}_{k,i}\left(\bm{\Phi}_\mathbf{k}\right) &\coloneqq \bigoplus_{j=0}^{i-1} \Phi_{k}^j \mathcal{W}, \qquad i = 0, \dots, N,
        \end{align}
        \end{subequations}
        with $\mathcal{F}_{e,0}\left(\bm{\Phi}_\mathbf{e}\right) = \{0\}$ and $\mathcal{F}_{k,0}\left(\bm{\Phi}_\mathbf{k}\right) =  \{0\}$ by convention.
\end{definition}
The FIR constraint~\eqref{eq:FIR-constraint} forces the error system, described by the error system responses~$\bm{\Phi}_\mathbf{e}, \, \bm{\Phi}_\mathbf{k}$, to reach a steady-state within $N\!+\!1$ time steps. This intuitively means that the error trajectories~\eqref{SLDRS:error-system} will stop diverging when the error system reaches the steady-state. The following theorem formalizes this intuition.
\begin{theorem}\label{theorem:SL-invariance}
	If the error system responses~$\bm{\Phi}_\mathbf{e},\,\bm{\Phi}_\mathbf{k}$ fulfill the FIR constraint~\eqref{eq:FIR-constraint} for system~\eqref{eq:dynamics}, then the control law
	\begin{equation}\label{eq:RPI-law}
	\pi\!\left(\mathbf{w}^N_i\right) = \sum_{j=0}^{N-1}\Phi_k^{j}w_{i-1-j}, \quad \forall i \geq N,
	\end{equation}
	where $\mathbf{w}^N_i = (w_{i-N}, \dots, w_{i-1})$ contains the past $N$ disturbances, keeps the error~$e_i, \, i \geq N$ in the set
	\begin{equation}
	\mathcal{F}_{e,N}\left(\bm{\Phi}_\mathbf{e}\right) = \bigoplus_{i=0}^{N-1} \Phi_e^i \mathcal{W}.
	\end{equation}
	Therefore, all possible error trajectories~\eqref{SLDRS:error-system} starting in~$e_0 = 0$, lie in the sequence of \SL-DRS~\eqref{eq:SL-DRS}.
\end{theorem}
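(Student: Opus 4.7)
The plan is to verify that under the proposed control law $\pi$, the error trajectory admits a convolutional representation in the matrices $\Phi_e^j$ that fits exactly inside the appropriate \SL-DRS. The key algebraic tool is the affine SLP constraint $\begin{bmatrix} I - \mathcal{ZA} & -\mathcal{ZB}\end{bmatrix}\begin{bmatrix}\bm{\Phi}_\mathbf{x}\\ \bm{\Phi}_\mathbf{u}\end{bmatrix}=I$ from Theorem~\ref{theorem:A-SLP}, which, when specialized to the block-lower-triangular Toeplitz structure imposed on $\bm{\Phi}_\mathbf{e}, \bm{\Phi}_\mathbf{k}$, yields the recursion $\Phi_e^0 = I$ and $\Phi_e^{j+1} = A\Phi_e^j + B\Phi_k^j$ for $j = 0,\dots,N-1$. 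Together with the FIR constraint $\Phi_e^N = 0$, this also gives the identity $A\Phi_e^{N-1} + B\Phi_k^{N-1} = 0$, which will drive the tail of the argument.

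For $i = 0,\dots,N$, the representation $e_i = \sum_{j=0}^{i-1} \Phi_e^j w_{i-1-j}$ is exactly the convolutional form of the finite-horizon SLP map $\mathbf{e} = \bm{\Phi}_\mathbf{e}^\mathbf{w}\mathbf{w}$, so $e_i \in \mathcal{F}_{e,i}(\bm{\Phi}_\mathbf{e})$ is immediate from Definition~\ref{def:SL-DRS} since each $w_{i-1-j}\in\mathcal{W}$. For $i \geq N$ I would proceed by induction. Assuming $e_i = \sum_{j=0}^{N-1}\Phi_e^j w_{i-1-j}$ and substituting $k_i = \pi(\mathbf{w}^N_i) = \sum_{j=0}^{N-1}\Phi_k^j w_{i-1-j}$ into the error recursion $e_{i+1} = Ae_i + Bk_i + w_i$, linearity groups terms coefficient-wise to give $e_{i+1} = \sum_{j=0}^{N-1}(A\Phi_e^j + B\Phi_k^j)w_{i-1-j} + w_i$. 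Applying the SLP recursion this equals $\sum_{j=0}^{N-1}\Phi_e^{j+1} w_{i-1-j} + w_i$; shifting the index and using $\Phi_e^N = 0$ together with $\Phi_e^0 = I$ to reabsorb the $+w_i$ term yields $e_{i+1} = \sum_{j=0}^{N-1}\Phi_e^j w_{i-j}$, which closes the induction.

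The containment $e_i \in \mathcal{F}_{e,N}(\bm{\Phi}_\mathbf{e})$ for all $i \geq N$ then follows directly from the definition of $\mathcal{F}_{e,N}$ as the Minkowski sum $\bigoplus_{j=0}^{N-1}\Phi_e^j\mathcal{W}$, and combining with the finite-horizon containments establishes the final claim that every possible error trajectory starting at $e_0 = 0$ lies in the sequence of \SL-DRS from \eqref{eq:SL-DRS}.

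The main obstacle I anticipate is not conceptual but notational: the inductive step must correctly reindex the convolution so that the term generated by lifting $\Phi_e^{N-1}$ becomes $\Phi_e^N w_{i-N}$ and is killed by the FIR hypothesis, while the fresh disturbance $w_i$ is merged into the leading $\Phi_e^0 w_i$ slot via $\Phi_e^0 = I$. Once this bookkeeping is carried out cleanly, the remainder of the proof reduces to applying Definition~\ref{def:SL-DRS} termwise.
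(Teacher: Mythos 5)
Your proposal is correct and follows essentially the same argument as the paper's proof: establish the finite-horizon containments directly from Definition~\ref{def:SL-DRS}, then propagate the dynamics for $i \ge N$ using the identity $A\Phi_e^{j} + B\Phi_k^{j} = \Phi_e^{j+1}$ from~\eqref{affine_SLS:affine-halfspace}, and finish the reindexing with $\Phi_e^N = \bm{0}$ and $\Phi_e^0 = I$. The bookkeeping you flag as the main obstacle is handled in exactly the way you describe.
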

\begin{proof}
Note that~$e_i \in \mathcal{F}_{e,i}\left(\bm{\Phi}_\mathbf{e}\right), \ k_i \in \mathcal{F}_{k,i}\left(\bm{\Phi}_\mathbf{k}\right)$ for ${i \in [0,N]}$ is a direct consequence of Definition~\ref{def:SL-DRS}. Hence to prove the theorem, we only need to show that
\begin{equation}\label{proof:assumption}
e_i \in \mathcal{F}_{e,N}\left(\bm{\Phi}_\mathbf{e}\right) \implies e_{i+1} \in \mathcal{F}_{e,N}\left(\bm{\Phi}_\mathbf{e}\right), \quad \forall i \geq N,
\end{equation}
under control law~\eqref{eq:RPI-law}, which fulfills~$\pi\!\left(\mathbf{w}^N_i\right) \in \mathcal{F}_{k,N}$ by definition. We observe that, due to the FIR constraint, any error state~$e_i \in \mathcal{F}_{e,N}\left(\bm{\Phi}_\mathbf{e}\right), \, i \geq N,$ is defined in terms of the past~$N$ disturbances, i.e.,
\begin{equation*}
e_i = \sum_{j=0}^{N-1}\Phi_e^{j}w_{i-1-j},
\end{equation*}
and that these past disturbances are known for~$i \geq N$. Using these observations, we can prove the theorem by propagating the dynamics:
\begin{align*}
e_{i+1} &= Ae_{i} + B\pi\!\left(\mathbf{w}^N_i\right) + w_{i} \\
&= A\sum_{j=0}^{N-1}\Phi_e^{j}w_{i-1-j} + B\sum_{j=0}^{N-1}\Phi_k^{j}w_{i-1-j} + w_{i} \\
&= \sum_{j=0}^{N-1}\left(A\Phi_e^{j} + B\Phi_k^{j}\right)w_{i-1-j} + w_{i} \\
&= \sum_{j=0}^{N-1}\Phi_e^{j+1}w_{i-1-j} + w_{i},
\end{align*}
where we used~\eqref{affine_SLS:affine-halfspace} for~$\Phi_e^{j}$ and~$\Phi_k^{j}$ in the last equality. Note, that~$\Phi_e^N = \bm{0}$ due to~\eqref{eq:FIR-constraint} and~$\Phi_e^0 = I$ due to~\eqref{affine_SLS:affine-halfspace}, therefore we have
\begin{align*}
e_{i+1} &= \Phi_e^{N}w_{i-N} + \sum_{j=0}^{N-2}\Phi_e^{j+1}w_{i-1-j} + \Phi_e^{0}w_{i} \\
&= \sum_{j=0}^{N-1}\Phi_e^{j}w_{i-j}.
\end{align*}
Then, observing that~$w_{i-j} \in \mathcal{W}$ for all~$j=0, \dots, N-1$ implies~$e_{i+1} \in \mathcal{F}_{e,N}\left(\bm{\Phi}_\mathbf{e}\right)$, which concludes the proof.
\end{proof}
The \SL-DRS therefore share some similarities with the DRS and the minimal RPI set. Similar to the DRS approaching the minimal RPI set in the limit, the \SL-DRS approach the set~$\mathcal{F}_{e,N}\left(\bm{\Phi}_\mathbf{e}\right)$, both of which contain all possible infinite horizon error trajectories. However, instead of approaching the set in the limit, the \SL-DRS approach it in~$N\!+\!1$ time steps.

\section{TUBE-BASED MPC USING \SL-DRS}\label{sec:SLTMPC}
The system level \tube-MPC~(SLTMPC) method introduced in~\cite{Sieber2021} requires an RPI terminal set, which is in general more difficult to compute than a PI terminal set. In the following, we formulate a \tube-based MPC method using the \SL-DRS as tubes, which is similar to the SLTMPC method but only requires a PI terminal set. Additionally, the proposed method allows concurrent optimization of the nominal trajectory and the tubes. Finally, we show that the \SL-DRS tubes can also be computed offline and compare the proposed MPC method to \ct-MPC and \RPI-tube-MPC.

\subsection{FIR-Constrained System Level Tube-MPC}
Before formulating the FIR-constrained SLTMPC method, we make a few standard MPC assumptions.
\begin{assumption}
The stage cost function~$l(x,u)$ is convex, uniformly continuous for all $x \in \mathcal{X}, \,u \in \mathcal{U}$, and there exists a $\mathcal{K}_\infty$-function~$\alpha(\cdot)$ satisfying~$l(x,u) \geq \alpha(\|x\|), \, \forall x \in \mathcal{X}, \,u \in \mathcal{U}$. The terminal set~$\mathcal{X}_f$ is PI for system~\eqref{eq:dynamics} governed by the control law~$\kappa_f(x): \mathbb{R}^n \to \mathbb{R}^m$ and it holds that~$\mathcal{X}_f \subseteq \mathcal{X} \ominus \mathcal{F}_{e,N}\left(\bm{\Phi}_\mathbf{e}\right)$ and~$\kappa_f(x) \in \mathcal{U} \ominus \mathcal{F}_{k,N}\left(\bm{\Phi}_\mathbf{k}\right), \ \forall x \in \mathcal{X}_f$. The terminal cost~$l_f(x)$ is convex, uniformly continuous, there exists a $\mathcal{K}_\infty$-function $\beta(\cdot)$ satisfying~$l_f(x) \leq \beta(\| x \|), \, \forall x \in \mathcal{X}_f$, and it holds that~$l_f(x^+) - l_f(x) \leq -l(x,\kappa_f(x))$ for the nominal system~$x^{+} = Ax + B\kappa_f(x), \, \forall x \in \mathcal{X}_f$. \label{assump:SLTMPC}
\end{assumption}
Then, we can formulate the \emph{FIR-constrained SLTMPC} as
\begin{subequations}\label{SLTMPC}
\begin{alignat}{2}
	\min_{\bm{\Phi}_\mathbf{x},\bm{\Phi}_\mathbf{u}} \quad & l_f\!\left(\phi_z^N\right) + \sum_{i=0}^{N-1} l\!\left(\phi_z^i + \Phi_e^i x_0, \,\phi_v^i + \Phi_k^i x_0\right) \label{SLTMPC:cost}\\
	\textrm{s.t. } \:\; & \begin{bmatrix} I - \mathcal{Z}\mathcal{A} & -\mathcal{Z}\mathcal{B}\end{bmatrix} \begin{bmatrix} \bm{\Phi}_\mathbf{x} \\ \bm{\Phi}_\mathbf{u} \end{bmatrix} = I, \label{SLTMPC:dyn}\\
	& \Phi_e^N = \bm{0}, \ \Phi_k^N = \bm{0}, \label{SLTMPC:FIR}\\
	& \phi_z^i + \Phi_e^i x_0 \in \mathcal{X} \ominus \mathcal{F}_{e,i}\left(\bm{\Phi}_\mathbf{e}\right),  &&\hspace*{-1.5cm} i = 0, \dots, N\!-\!1, \label{SLTMPC:state_constraints} \\
        & \phi_v^i + \Phi_k^i x_0 \in \mathcal{U} \ominus \mathcal{F}_{k,i}\left(\bm{\Phi}_\mathbf{k}\right),  &&\hspace*{-1.5cm} i = 0, \dots, N\!-\!1, \label{SLTMPC:input_constraints} \\
        & \phi_z^N \in \mathcal{X}_f, \label{SLTMPC:terminal_constraint} \\
        & x_0 = x(t),
\end{alignat}
\end{subequations}
where the stage cost function~$l(\cdot,\cdot)$, terminal cost function~$l_f(\cdot)$, and terminal set~$\mathcal{X}_f$ satisfy Assumption~\ref{assump:SLTMPC}. The resulting MPC control law is given by
\begin{equation}\label{SLTMPC:control-law}
\kappa(x(t)) = \phi_v^{0,*} + \Phi_k^{0,*} x(t),
\end{equation}
where~$\phi_v^{0,*}, \Phi_k^{0,*}$ are optimizers of~\eqref{SLTMPC}. In order to write the constraints~\eqref{SLTMPC:state_constraints} -~\eqref{SLTMPC:terminal_constraint} in closed-form, we rely on a standard technique from disturbance feedback MPC~\cite{Goulart2006}, which removes the explicit dependence on the disturbance trajectory by employing Lagrangian dual variables. This corresponds to an implicit representation of the \SL-DRS used for constraint tightening and enables the concurrent optimization of the \SL-DRS and the nominal trajectories~$\mathbf{z}, \mathbf{v}$.

Note that the choice of the PI terminal set~$\mathcal{X}_f$ needs additional discussion, since it is required to be a subset of the tightened state constraints and the tightening is not known prior to solving~\eqref{SLTMPC}. Therefore, we need to either choose~$\mathcal{X}_f$ such that it stays PI under any tightening or define~$\mathcal{X}_f$ implicitly, which also enables optimization over~$\mathcal{X}_f$ when solving~\eqref{SLTMPC}. While this restricts the choice of~$\mathcal{X}_f$ compared to \RPI-tube-MPC, it is easy to find an~$\mathcal{X}_f$ which fulfills one of the above criteria; we discuss three possible options below. The first option is to choose the set of steady-states for system~\eqref{eq:dynamics}, since this set remains PI under any tightening, assuming at least one steady-state is contained in the tightened terminal set. The second option is to use a scaled version of any PI set, i.e.,~$\mathcal{X}_f = \lambda \mathcal{S}, \lambda \geq 0$ where~$\mathcal{S}$ is PI, and also optimize over the scaling~$\lambda$ in~$\eqref{SLTMPC}$ such that~$\mathcal{X}_f \subseteq \mathcal{X} \ominus \mathcal{F}_{e,N}\left(\bm{\Phi}_\mathbf{e}\right)$ and~$\kappa_f(x) \in \mathcal{U} \ominus \mathcal{F}_{k,N}\left(\bm{\Phi}_\mathbf{k}\right), \ \forall x \in \mathcal{X}_f$ are fulfilled. The third option is to define the terminal set implicitly through the feasible set of a nominal MPC regulator. This can be achieved by choosing the MPC horizon longer than the FIR horizon, i.e.,~$N_\textup{MPC} > N$ and adding the constraints~$\phi_z^{N_\textup{MPC}} = 0, \, \phi_v^{N_\textup{MPC}}=0$ to~\eqref{SLTMPC}. Note that these options are examples and there exist other terminal sets fulfilling the necessary criteria.

Next, we show that the application of~\eqref{SLTMPC:control-law} to system~\eqref{eq:dynamics} results in a closed-loop system, which is input-to-state stable~(ISS)~\cite[Definition~3]{Limon2009} in the RPI set spanned by all initial states~$x(t)$ for which~\eqref{SLTMPC} is feasible.
\begin{proposition}\label{prop:SLTMPC}
Given that Assumption~\ref{assump:SLTMPC} holds, system~\eqref{eq:dynamics} subject to constraints~\eqref{eq:constraints} and controlled by control law~\eqref{SLTMPC:control-law}, is ISS in $\mathcal{X}_\textup{feas}$, which is the set of all initial states~$x(t)$ for which~\eqref{SLTMPC} is feasible.
\end{proposition}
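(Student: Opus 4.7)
The plan is to show that the optimal value function $V^\ast(x) \coloneqq J^\ast(x)$ of problem~\eqref{SLTMPC}, viewed as a function of the measured state $x(t)$, is an ISS Lyapunov function on $\mathcal{X}_\textup{feas}$ in the sense of Definition~\ref{def:ISS-Lyap}. Together with standard converse ISS Lyapunov arguments (e.g.\ from~\cite{Limon2009}), this will yield the claimed ISS property of the closed loop. To that end, I would follow the template used for the SLTMPC method in~\cite{Sieber2021}, adapting it to the FIR-constrained \SL-DRS tubes and the PI terminal set used here.

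First I would establish recursive feasibility. Given optimizers $(\bm{\Phi}_\mathbf{x}^\ast, \bm{\Phi}_\mathbf{u}^\ast)$ at $x(t)$, I construct a candidate solution at $x(t+1) = Ax(t) + B\kappa(x(t)) + w(t)$ for any admissible $w(t)\in\mathcal{W}$ by shifting the optimal error system responses one block along the Toeplitz diagonal (so the candidate $i$-th block inherits the optimal $(i+1)$-th block) and appending a new final block derived from the terminal controller $\kappa_f$. The candidate nominal trajectory is obtained analogously: concatenate the tail of the optimal nominal with one additional step of $\kappa_f$ evaluated at $\phi_z^{N,\ast}$. Theorem~\ref{theorem:SL-invariance} is the key ingredient, because it guarantees that the shifted responses still map disturbances into \SL-DRS that are contained in the ones used for the tightening in the shifted stages, so the tightened constraints~\eqref{SLTMPC:state_constraints}–\eqref{SLTMPC:input_constraints} remain satisfied. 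Assumption~\ref{assump:SLTMPC} then ensures that the appended terminal step stays in $\mathcal{X}_f\subseteq\mathcal{X}\ominus\mathcal{F}_{e,N}(\bm{\Phi}_\mathbf{e})$ with input in $\mathcal{U}\ominus\mathcal{F}_{k,N}(\bm{\Phi}_\mathbf{k})$, so \eqref{SLTMPC:dyn}–\eqref{SLTMPC:terminal_constraint} all hold with $x_0 = x(t+1)$; hence $x(t+1)\in\mathcal{X}_\textup{feas}$.

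Second, I would derive the ISS Lyapunov bounds on $V^\ast$. The lower bound $\alpha_1(\|x\|)\le V^\ast(x)$ follows immediately from evaluating the first stage cost using $\phi_z^0=0$, $\Phi_e^0=I$ (so the argument of $l$ at $i=0$ is $x$ itself) together with the $\mathcal{K}_\infty$ bound $l(x,u)\ge\alpha(\|x\|)$ in Assumption~\ref{assump:SLTMPC}. The upper bound $V^\ast(x)\le\alpha_2(\|x\|)$ is obtained in a neighborhood of the origin inside $\mathcal{X}_\textup{feas}$ by the standard argument that uses convexity and uniform continuity of $l$ and $l_f$, together with $l(0,0)=0$ and $l_f(0)=0$ (both implied by convexity and the $\mathcal{K}_\infty$ lower bound), yielding a $\mathcal{K}_\infty$-class upper bound on the optimal cost in terms of $\|x\|$. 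For the descent inequality I evaluate the candidate cost at $x(t+1)$ and compare to $V^\ast(x(t))$: the temporal shift telescopes away the first stage cost $l(x(t),\kappa(x(t)))$, while the terminal descent property $l_f(x^+)-l_f(x)\le-l(x,\kappa_f(x))$ in Assumption~\ref{assump:SLTMPC} exactly absorbs the newly appended stage cost. What remains is a perturbation of each shifted stage (and of the terminal) cost caused by $w(t)$ propagating through the optimal system responses; uniform continuity of $l$ and $l_f$ on the compact feasible domain then allows me to bound this remainder by $\gamma(\|w(t)\|)$ for some $\gamma\in\mathcal{K}$. Combining everything gives $V^\ast(x(t+1))-V^\ast(x(t))\le-\alpha_3(\|x(t)\|)+\gamma(\|w(t)\|)$ with $\alpha_3\coloneqq\alpha$, which is \eqref{ISS-L:decrease}.

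The main obstacle will be making the candidate construction rigorous within the affine SLP parameterization: because every feasible solution must obey $\Phi_e^0=I$ and $\phi_z^0=0$ through~\eqref{SLTMPC:dyn}, a naive index shift of $(\bm{\Phi}_\mathbf{x}^\ast,\bm{\Phi}_\mathbf{u}^\ast)$ does not by itself yield an admissible candidate. To handle this, I would rewrite the shifted closed-loop trajectory in the equivalent disturbance-feedback form, absorb the one-step update induced by $w(t)$ into the nominal part via the identities $\Phi_e^{j+1}=A\Phi_e^{j}+B\Phi_k^{j}$ used in the proof of Theorem~\ref{theorem:SL-invariance}, and verify separately that the FIR constraint~\eqref{SLTMPC:FIR} is inherited by the construction because the appended terminal block contributes zero to $\Phi_e^{N}$ and $\Phi_k^{N}$. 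Once this bookkeeping is in place, all remaining steps are routine and closely mirror the arguments in~\cite{Sieber2021}.
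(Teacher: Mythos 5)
Your proposal follows the same route as the paper's proof: recursive feasibility via a shifted candidate solution, then the optimal value function as an ISS Lyapunov function, concluding with the standard ISS result from~\cite{Limon2009}. The candidate you arrive at after your final ``bookkeeping'' fix is essentially the paper's: the error system responses are \emph{not} shifted at all (the Toeplitz structure makes this unnecessary); instead the shifted nominal trajectory absorbs the realized disturbance, $\bar{z}_j = \hat{z}_j + \Phi_e^{j,*}w(t)$, $\bar{v}_j = \hat{v}_j + \Phi_k^{j,*}w(t)$, and constraint satisfaction follows from the Minkowski-sum nesting $\bigl\{\Phi_e^{j,*}w(t)\bigr\}\oplus\mathcal{F}_{e,j}\bigl(\bm{\Phi}^*_\mathbf{e}\bigr)\subseteq\mathcal{F}_{e,j+1}\bigl(\bm{\Phi}^*_\mathbf{e}\bigr)$, which is a direct consequence of Definition~\ref{def:SL-DRS} rather than of Theorem~\ref{theorem:SL-invariance} as you suggest. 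The one step you leave under-specified is the terminal one, and it is precisely the step on which the point of the proposition (a merely PI terminal set) hinges: because $\Phi_e^{N,*}=\bm{0}$ by the FIR constraint~\eqref{SLTMPC:FIR}, the candidate terminal state is $\bar{z}_N = Az_N^* + B\kappa_f(z_N^*)$ \emph{exactly}, with no $w(t)$-dependent perturbation, so positive invariance of $\mathcal{X}_f$ under $\kappa_f$ suffices; without this cancellation your argument would need an RPI terminal set. For the same reason the terminal cost is not perturbed by $w(t)$ --- your parenthetical ``(and of the terminal)'' in the decrease argument suggests you expected it to be. Everything else (telescoping, the terminal descent absorbing the appended stage cost, uniform continuity yielding the $\gamma(\|w\|)$ term, and the upper bound on the value function obtained on all of $\mathcal{X}_\textup{feas}$ via compactness and~\cite[Propositions~2.15--2.16]{Rawlings2009} rather than only locally near the origin) matches the paper.
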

\begin{proof}
To prove the proposition, we first show that~\eqref{SLTMPC} is recursively feasible, before showing that the value function of~\eqref{SLTMPC} is an ISS Lyapunov function in~$\mathcal{X}_\textup{feas}$~(Definition~\ref{def:ISS-Lyap}). Subsequently, we use~\cite[Theorem~3]{Limon2009} to conclude that the uncertain system under control law~\eqref{SLTMPC:control-law} is ISS in~$\mathcal{X}_\textup{feas}$.
\newline\textit{Recursive Feasibility:} Consider a state~$x(t) \in \mathcal{X}_\textup{feas}$ and denote by~$\mathbf{z}^*(x(t)), \, \mathbf{v}^*(x(t))$ the optimal solution of~\eqref{SLTMPC} with initial condition~$x(t)$. Let~$x^+ = x(t\!+\!1)$ be the actual state at the next sampling time and define the sequences~$\hat{\mathbf{z}}(x^+), \hat{\mathbf{v}}(x^+)$ as the shifted previous solutions applied at this state, i.e.,
\begin{align*}
\hat{\mathbf{z}}(x^+) &\coloneqq (z_1^*, \dots, z_{N}^*, Az_{N}^*\!+\!B\kappa_f(z_N^*)), \\
\hat{\mathbf{v}}(x^+) &\coloneqq (v_1^*, \dots, v_{N-1}^*, \kappa_f(z_N^*)).
\end{align*}
Then, we construct the candidate sequences for the initial state~$x^+$ as~$\bar{\mathbf{z}}(x^+) \coloneqq (\bar{z}_0, \dots, \bar{z}_N), \, \bar{\mathbf{v}}(x^+) \coloneqq (\bar{v}_0, \dots, \bar{v}_{N-1})$, where
\begin{align*}
\bar{z}_j &= \hat{z}_j + \Phi_e^{j,*}w(t), \quad j=0,\dots,N, \\
\bar{v}_j &= \hat{v}_j + \Phi_k^{j,*}w(t), \quad j=0,\dots,N\!-\!1,
\end{align*}
and~$\Phi_e^{j,*}, \, \Phi_k^{j,*}$ are the error system responses from the previous solution. We verify that~$\bar{z}_j \in \mathcal{X} \ominus \mathcal{F}_{e,j}\left(\bm{\Phi}^*_\mathbf{e}\right)$ for~${j=0,\dots,N\!-\!1}$, since~$\left\lbrace \hat{z}_j \right\rbrace \oplus \mathcal{F}_{e,j+1}\left(\bm{\Phi}^*_\mathbf{e}\right) \subseteq \mathcal{X} $ and~$\left\lbrace \Phi_e^{j,*}w(t) \right\rbrace \oplus \mathcal{F}_{e,j}\left(\bm{\Phi}^*_\mathbf{e}\right) \subseteq \mathcal{F}_{e,j+1}\left(\bm{\Phi}^*_\mathbf{e}\right)$ due to~\eqref{eq:SL-DRS}. For~$j=N$, we have $\bar{z}_N = \hat{z}_N$ due to the FIR constraint~\eqref{SLTMPC:FIR}, which allows us to verify that $\bar{z}_N \in \mathcal{X}_f$ since~$\mathcal{X}_f$ is PI. Analogously, we can show that~$\bar{v}_j \in \mathcal{U} \ominus \mathcal{F}_{k,j}\left(\bm{\Phi}_\mathbf{k}\right)$ for~$j=0,\dots,N\!-\!2$. For~${j=N\!-\!1}$, $\bar{v}_{N-1} \in \mathcal{U} \ominus \mathcal{F}_{k,N-1}\left(\bm{\Phi}_\mathbf{k}\right)$ since~$\hat{v}_{N-1} = \kappa_f(z_N^*) \in \mathcal{U} \ominus \mathcal{F}_{k,N}\left(\bm{\Phi}_\mathbf{k}\right)$ and~$\{ \Phi_k^{N-1,*}w(t) \} \oplus \mathcal{F}_{k,N-1}\left(\bm{\Phi}^*_\mathbf{k}\right) \subseteq \mathcal{F}_{k,N}\left(\bm{\Phi}^*_\mathbf{k}\right)$ due to~\eqref{eq:SL-DRS}. Therefore,~$\bar{\mathbf{z}}(x^+), \bar{\mathbf{v}}(x^+)$ are feasible solutions of~\eqref{SLTMPC} for~$x(t+1)$ and recursive feasibility is proven.
\newline\textit{ISS Lyapunov Function:}
Denote~\eqref{SLTMPC:cost} as~$V_N(x(t); \mathbf{z}, \mathbf{v})$ and the value function of~\eqref{SLTMPC} as~$V_N^*(x(t))$, i.e.,~$V_N(\cdot)$ evaluated at the optimal solution. Then, it follows that
\begin{align*}
V_N^*(x(t\!+\!1)) - V_N^*(x(t)) &\leq V_N(x(t\!+\!1); \bar{\mathbf{z}}, \bar{\mathbf{v}}) - V_N^*(x(t)) \\
&\mathrel{\overset{\makebox[0pt]{\mbox{\normalfont\scriptsize (a)}}}{=}} l_f(\bar{z}_N) \!-\! l_f(z_N^*) \!+\! l(\hat{z}_{N\!-\!1},\hat{v}_{N\!-\!1}) \\
&\quad- l(z_0^*,v_0^*)  \\
&\quad+ \sum_{j=0}^{N-1} \left( l(\bar{z}_{j}, \bar{v}_{j}) - l(\hat{z}_j, \hat{v}_j)\right) \\
&\mathrel{\overset{\makebox[0pt]{\mbox{\normalfont\scriptsize (b)}}}{\leq}} - l(z_0^*,v_0^*) \\
&\quad+ \sum_{j=0}^{N-1} \left( l(\bar{z}_{j}, \bar{v}_{j}) - l(\hat{z}_j, \hat{v}_j)\right)
\end{align*}
where, $\bar{\mathbf{z}}, \bar{\mathbf{v}}$ are the candidate sequences. In~(a), we summarized the stage costs under one sum by adding the term~$l(\hat{z}_{N\!-\!1},\hat{v}_{N\!-\!1}) - l(\hat{z}_{N\!-\!1},\hat{v}_{N\!-\!1})$ on the right hand side and by using the fact that~$z_j^* = \hat{z}_{j-1}, v_j^* = \hat{v}_{j-1}$ by definition. For~(b) we used Assumption~\ref{assump:SLTMPC}, in particular~$l_f(Az_N^* + B\kappa_f(z_N^*)) - l_f(z_N^*) \leq -l(z_N^*,\kappa_f(z_N^*))$, since~$z_N^* \in \mathcal{X}_f$. Next, we deduct that~$-l(z_0^*,v_0^*) \leq -\alpha(\|x(t)\|)$ due to Assumption~\ref{assump:SLTMPC} and~$z_0^* = x(t)$. Note that, due to~\cite[Lemma~1]{Limon2009} and uniform continuity of the stage cost (Assumption~\ref{assump:SLTMPC}), there exist some~$\mathcal{K}_\infty$-functions~$\sigma_x, \, \sigma_u$ such that
\begin{align*}
l(\bar{z}_{j},\! \bar{v}_{j}) \!-\! l(\hat{z}_j, \!\hat{v}_j) \!&\leq\! \bigl\vert l(\hat{z}_j \!+\! \Phi_e^{j,*}\!\!w(t), \hat{v}_j \!+\! \Phi_k^{j,*}\!\!w(t)) \!-\! l(\hat{z}_j, \!\hat{v}_j) \bigr\vert \\
\!&\leq \sigma_x(\|\Phi_e^{j,*}\! w(t)\|) \!+\! \sigma_u(\|\Phi_k^{j,*}\! w(t)\|),
\end{align*}
for all~$j = 0, \dots, N\!-\!1$. Then, there exists a $\mathcal{K}$-function~$\gamma$ such that $V_N^*(x(t\!+\!1)) - V_N^*(x(t)) \leq \!-\alpha(\|x(t)\|) + \gamma(\|w(t)\|)$, which proves that~$V^*_N(x)$ fulfills~\eqref{ISS-L:decrease}.
Since~$V^*_N(x) \geq l(x,\kappa(x)), \, \forall x \in \mathcal{X}_\textup{feas}$, the optimal value function is lower bounded by the $\mathcal{K}_\infty$-function~$\alpha$ by assumption, i.e.,~$V^*_N(x) \geq \alpha(\|x\|), \, \forall x \in \mathcal{X}_\textup{feas}$. Using the facts that~$l_f(x)$ is upper bounded, the state and input constraints are compact, and the optimal value function~$V^*_N(x)$ is monotone, we deduct that~$V^*_N(x)$ is upper bounded by a~$\mathcal{K}_\infty$-function in~$\mathcal{X}_\textup{feas}$ due to~\cite[Propositions 2.15 - 2.16]{Rawlings2009}. This proves that~$V^*_N(x)$ also fulfills~\eqref{ISS-L:bounds} and that~$V^*_N(x)$ is an ISS Lyapunov function in~$\mathcal{X}_\textup{feas}$.
\end{proof}

\subsection{Computing the \SL-DRS Offline}
Instead of computing the \SL-DRS and nominal trajectories~$\mathbf{z}, \mathbf{v}$ concurrently, we can also compute the \SL-DRS offline and only optimize the nominal trajectories online. The offline computation is formulated as
\begin{subequations}\label{SLTMPC:tube-comp}
\begin{align}
	\min_{\bar{\bm{\Phi}}_\mathbf{e},\bar{\bm{\Phi}}_\mathbf{k}} \quad & L(\bar{\bm{\Phi}}_\mathbf{e}, \bar{\bm{\Phi}}_\mathbf{k}) \label{offTubeSLTMPC:cost}\\
	\textrm{s.t. } \:\; & \begin{bmatrix} I - \mathcal{Z}\mathcal{A} & -\mathcal{Z}\mathcal{B}\end{bmatrix} \begin{bmatrix} \bar{\bm{\Phi}}_\mathbf{e} \\ \bar{\bm{\Phi}}_\mathbf{k} \end{bmatrix} = I, \label{offTubeSLTMPC:dyn}\\
	& \bar{\Phi}_e^N = \bm{0}, \ \bar{\Phi}_k^N = \bm{0},\\
        & \mathcal{F}_{e,N}\left(\bar{\bm{\Phi}}_\mathbf{e}\right) \subseteq \mathcal{X}, \quad \mathcal{F}_{k,N}\left(\bar{\bm{\Phi}}_\mathbf{k}\right) \subseteq \mathcal{U}, \label{offTubeSLTMPC:terminal_constraint}
\end{align}
\end{subequations}
where~$L(\cdot,\cdot)$ is an appropriate cost function. The choice of~$L(\cdot,\cdot)$ heavily influences the shape and characteristics of the resulting \SL-DRS. For example~$L(\bar{\bm{\Phi}}_\mathbf{e}, \bar{\bm{\Phi}}_\mathbf{k}) = \|\,\mathcal{C}\begin{bmatrix} \bar{\bm{\Phi}}_\mathbf{e}^\top & \bar{\bm{\Phi}}_\mathbf{k}^\top \end{bmatrix}^\top\!\|_{\infty \to \infty}$, where~$\|\cdot\|_{\infty\to\infty}$ is the induced maximum norm and~$\mathcal{C}$ an appropriate weighting matrix, optimizes the worst case gain with respect to the infinity norm from disturbance signal~$\mathbf{w}$ to error signals~$\mathbf{e}, \mathbf{k}$.

Since we compute the \SL-DRS offline, we also compute the tightenings offline and can thus employ a PI terminal set~$\mathcal{X}_f$ with no additional restrictions. Therefore, we define the MPC problem using the precomputed \SL-DRS similar to the \ct-MPC and \RPI-tube-MPC problems:
\begin{subequations}\label{SLTMPC:offline}
\begin{alignat}{2}
	\min_{\mathbf{z},\mathbf{v}} \quad & l_f(z_N) + \sum_{i=0}^{N-1}l(z_i,v_i) \label{offSLTMPC:cost}\\
	\textrm{s.t. } \:\; & z_{i+1} = Az_i + Bv_i,  &&i = 0, \dots, N\!-\!1, \label{offSLTMPC:dyn}\\
	& z_i \in \mathcal{X} \ominus \mathcal{F}_{e,i}\left(\bar{\bm{\Phi}}_\mathbf{e}\right),  &&i = 0, \dots, N\!-\!1, \label{offSLTMPC:state_constraints}\\
        & v_i \in \mathcal{U} \ominus \mathcal{F}_{k,i}\left(\bar{\bm{\Phi}}_\mathbf{k}\right),  &&i = 0, \dots, N\!-\!1, \label{offSLTMPC:input_constraints} \\
        & z_N \in \mathcal{X}_f, \label{offSLTMPC:terminal_constraint} \\
        & z_0 = x(t),
\end{alignat}
\end{subequations}
where~$\mathcal{F}_{e,i}\left(\bar{\bm{\Phi}}_\mathbf{e}\right)$ and~$\mathcal{F}_{k,i}\left(\bar{\bm{\Phi}}_\mathbf{k}\right)$ are precomputed using~\eqref{SLTMPC:tube-comp} and the stage cost function~$l(\cdot,\cdot)$, terminal cost function~$l_f(\cdot)$, and terminal set~$\mathcal{X}_f$ satisfy Assumption~\ref{assump:SLTMPC}. Similar to the online version of the FIR-constrained SLTMPC method in~\eqref{SLTMPC}, problem~\eqref{SLTMPC:offline} produces a control law which ensures ISS of the closed-loop system.
\begin{remark}
The two FIR-constrained SLTMPC methods~\eqref{SLTMPC} and~\eqref{SLTMPC:offline} can also handle min-max cost functions such as $H_\infty$ or $L_1$, since the affine SLP allows the implicit reformulation of the inner maximization problem, see~\cite[Section~2.2]{Anderson2019} for more details. However, using a min-max cost function in~\eqref{SLTMPC} requires a separate discussion of ISS of the resulting FIR-constrained SLTMPC problem similar to~\cite[Section~5.3]{Limon2009}.
\end{remark}

\subsection{Comparison to \ct-MPC and \RPI-tube-MPC}
Since the FIR-constrained SLTMPC method concurrently optimizes the \SL-DRS and the nominal trajectories, its computational complexity is higher than the computational complexity of \ct-MPC and \RPI-tube-MPC. However, if the \SL-DRS are computed offline, then the online computations are reduced to optimizing the nominal trajectories and thus the computational complexity is similar to that of \ct-MPC and \RPI-tube-MPC. As a consequence of Theorem~\ref{theorem:SL-invariance}, the \SL-DRS combine the growing set characteristic of DRS with the ability to contain the infinite horizon evolution. Therefore, the MPC methods~\eqref{SLTMPC} and~\eqref{SLTMPC:offline} combine the characteristics of \ct-MPC and \RPI-tube-MPC and thereby alleviate their individual drawbacks, namely the RPI terminal set for \ct-MPC and that full control authority is not available at the initial state for \RPI-tube-MPC.

\section{NUMERICAL RESULTS}\label{sec:numerical_section}
In the following, we compare the proposed FIR-constrained SLTMPC method~\eqref{SLTMPC} to the SLTMPC method proposed in~\cite{Sieber2021}. Subsequently, we show that precomputing the \SL-DRS and using~\eqref{SLTMPC:offline} outperforms \ct-MPC and \RPI-tube-MPC. Finally, we show how the choice of particular cost functions in~\eqref{SLTMPC:tube-comp} affects the resulting \SL-DRS. All examples are implemented in Python using CVXPY~\cite{cvxpy} and are solved using MOSEK~\cite{mosek}. The examples were run on a machine equipped with an Intel~i7-8665U~(\unit[1.9]{GHz}) CPU and \unit[16]{GB} of RAM. In all experiments, we consider the uncertain LTI system~\eqref{eq:dynamics} with discrete-time dynamic matrices
\begin{equation*}
A = \begin{bmatrix} 1.05 & 0.15 \\ 0 & 1 \end{bmatrix}, \quad B = \begin{bmatrix} 0.5 \\ 0.5 \end{bmatrix},
\end{equation*}
subject to the polytopic constraints
\begin{equation*}
\begin{bmatrix} \scalebox{0.7}[1.0]{\( - \)}1 \\ \scalebox{0.7}[1.0]{\( - \)}1.5 \end{bmatrix} \!\!\leq\!\! \begin{bmatrix} x_1 \\ x_2 \end{bmatrix} \!\!\leq\!\! \begin{bmatrix} 0.5 \\ 1.5 \end{bmatrix}\!, \scalebox{0.7}[1.0]{\( - \)}0.5 \!\leq\! u \!\leq\! 0.5, \begin{bmatrix} \scalebox{0.7}[1.0]{\( - \)}\theta \\ \scalebox{0.7}[1.0]{\( - \)}0.1 \end{bmatrix} \!\!\leq\!\! \begin{bmatrix} w_1 \\ w_2 \end{bmatrix} \!\!\leq\!\! \begin{bmatrix} \theta \\ 0.1 \end{bmatrix}\!,
\end{equation*}
cost function~$l(z,v) = z^\top\! Qz + v^\top\! Rv$ with $Q=100\cdot I$ and $R=10$, horizon $N=10$, and parameter $\theta$. In all experiments, we use the same constant feedback controller~$K$, which is computed such that~$K$ minimizes the constraint tightening, using the method proposed in~\cite[Section~7.2]{Limon2010}.
\begin{figure}
\centering
\includegraphics[width=0.96\linewidth]{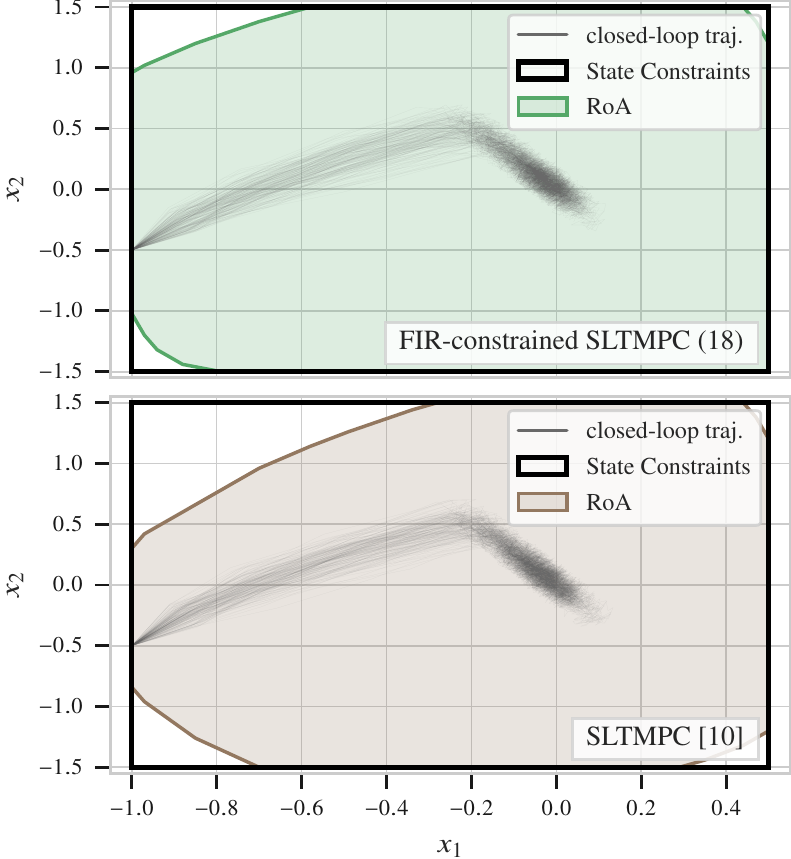}
\caption{RoA and 200 closed-loop trajectories for FIR-constrained SLTMPC~(top) and SLTMPC~(bottom), with $x_0 = [-1.0, -0.5]^\top\!,$ and $\theta=0.04$.}
\label{fig:SLTMPC-vs-FIR}
\end{figure}

\subsection{Comparison of FIR-constrained SLTMPC~\eqref{SLTMPC} to SLTMPC~\cite{Sieber2021}}
To assess the effectiveness of the proposed FIR-constrained SLTMPC~\eqref{SLTMPC}, we compare it to SLTMPC~\cite{Sieber2021}. As the terminal set for SLTMPC we choose the maximal RPI set for system~\eqref{eq:dynamics} subject to constraints~\eqref{eq:constraints} under the constant feedback controller~$K$. For~\eqref{SLTMPC}, we choose~$\mathcal{X}_f = \lambda \mathcal{S}$, where~$\mathcal{S}$ is the maximal PI set for system~\eqref{eq:dynamics} subject to~\eqref{eq:constraints} under the same controller~$K$ and optimize~$\lambda$ in~\eqref{SLTMPC}. Figure~\ref{fig:SLTMPC-vs-FIR} shows the regions of attraction~(RoA) and 200 closed-loop trajectories with different noise realizations for both methods and parameter choice~$\theta = 0.04$. Both methods compute similar closed-loop trajectories with mean closed-loop costs of \unit[523.6]{} for~\eqref{SLTMPC} and \unit[531.8]{} for SLTMPC. However, the RoA for FIR-constrained SLTMPC is larger than for SLTMPC. In this experiment, both methods achieved similar average solve times of~\unit[30-35]{ms}.
\begin{figure}
\centering
\includegraphics[width=0.96\linewidth]{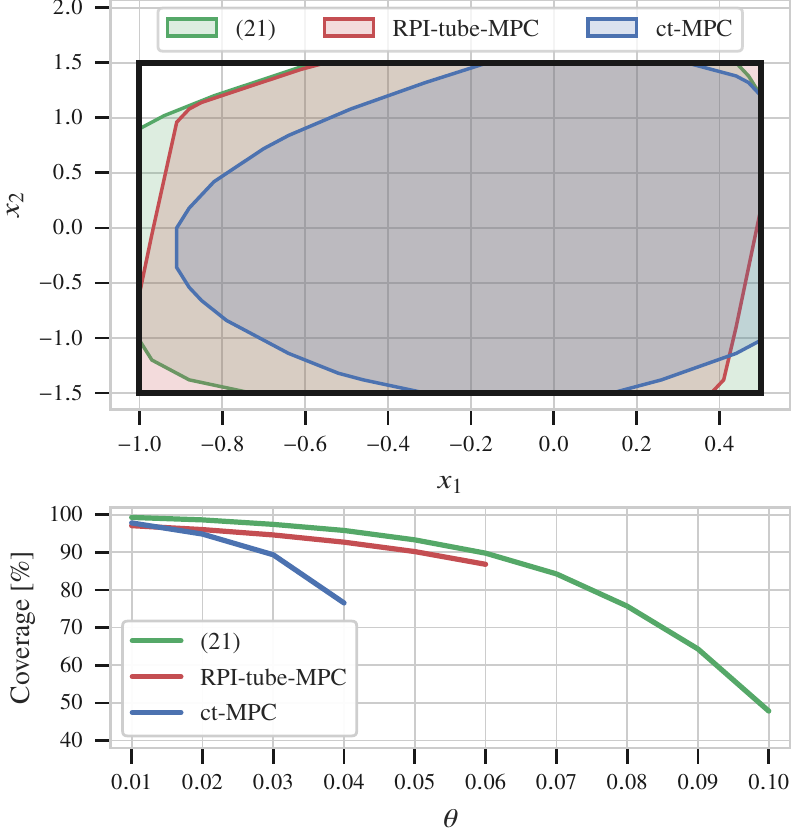}
\caption{RoA of FIR-constrained SLTMPC, RPI-tube-MPC, and ct-MPC for~${\theta = 0.04}$~(top) and approximate RoA coverage with respect to the state constraints in percent, as a function of parameter~$\theta$~(bottom).}\label{fig:RoA-coverage}
\end{figure}

\subsection{Comparison of~\eqref{SLTMPC:offline} to RPI-tube-MPC and ct-MPC}
To illustrate the benefits of using varying tube controllers instead of a single fixed one, we compare the FIR-constrained SLTMPC method with precomputed \SL-DRS~\eqref{SLTMPC:offline} to \RPI-tube-MPC and \ct-MPC, which both use~$K$ as their tube controller. The tubes for~\eqref{SLTMPC:offline} are computed using~\eqref{SLTMPC:tube-comp} and a cost function which minimizes the state and input tightenings, i.e.,
\begin{equation}\label{eq:min-tight}
L(\bm{\Phi}_\mathbf{e}, \bm{\Phi}_\mathbf{k}) = \Vert\bm{\Lambda}_\mathbf{e} \mathbf{h}_\mathbf{w}\Vert_\infty + \Vert\bm{\Lambda}_\mathbf{k} \mathbf{h}_\mathbf{w}\Vert_\infty,
\end{equation}
where~$\mathbf{h}_\mathbf{w} = (h_w, \dots, h_w)$ is the $N$-times stacked disturbance vector and~$\bm{\Lambda}_\mathbf{e}, \, \bm{\Lambda}_\mathbf{k}$ are the Lagrangian dual variables. As the terminal sets, we choose the maximal PI set under controller~$K$ for FIR-constrained SLTMPC and \RPI-tube-MPC and the maximal RPI set under the same controller for \ct-MPC, respectively. Figure~\ref{fig:RoA-coverage}~(top) shows the RoAs of all three methods for~$\theta = 0.04$. FIR-constrained SLTMPC achieves a RoA similar to \RPI-tube-MPC but a significantly larger one than \ct-MPC. This is further highlighted by Figure~\ref{fig:RoA-coverage}~(bottom), which depicts the approximate coverage of the RoA, i.e., the area of the state constraint set covered by the RoA in percent. The RoA coverage of FIR-constrained SLTMPC is consistently larger than for the other two methods and it also remains feasible for significantly larger noise levels. In this setting, all three methods achieved similar closed-loop costs and average solve times of~\unit[5-6]{ms}.

\subsection{Effect of different cost functions for~\eqref{SLTMPC:tube-comp}}
The choice of cost function~\eqref{offTubeSLTMPC:cost} naturally affects the shape and size of the sets computed by~\eqref{SLTMPC:tube-comp}. Figure~\ref{fig:SL-DRS-costs} shows the sets~$\mathcal{F}_{e,N}\left(\bm{\Phi}_\mathbf{e} \right)$~(top) and~$\mathcal{F}_{k,N}\left(\bm{\Phi}_\mathbf{k} \right)$~(bottom) for four different cost functions\footnote{For the definitions of LQR,~$\mathcal{L}_1$, and~$\mathcal{H}_\infty$ costs in terms of the system responses~$\bm{\Phi}_\mathbf{e}, \, \bm{\Phi}_\mathbf{k}$, see~\cite[Section~2.2]{Anderson2019}.}. As expected, the minimal tightening cost function~\eqref{eq:min-tight} produces the smallest sets for both states and inputs. The~$\mathcal{L}_1$ and~$\mathcal{H}_\infty$ cost functions produce sets~$\mathcal{F}_{k,N}\left(\bm{\Phi}_\mathbf{k} \right)$, which are equal to the input constraints, resulting in overly conservative optimization problems~\eqref{SLTMPC:offline}. This can for example be alleviated by extending the cost function~\eqref{offTubeSLTMPC:cost} by a regularization term or by modifying the constraints~\eqref{offTubeSLTMPC:terminal_constraint}, e.g.,~$\mathcal{F}_{e,N}\left(\bm{\Phi}_\mathbf{e} \right) \subseteq \rho_x \mathcal{X}, \, \mathcal{F}_{k,N}\left(\bm{\Phi}_\mathbf{k} \right) \subseteq \rho_u \mathcal{U}$ with design parameters~$\rho_{x/u} \in [0,1]$.

\section{CONCLUSIONS}\label{sec:conclusions}
This paper has introduced system level disturbance reachable sets~(\SL-DRS) for constrained linear time-invariant systems with additive noise. We showed that imposing a finite impulse response~(FIR) constraint on the \SL-DRS guarantees that the error trajectories are contained in a finite sequence of \SL-DRS for all time steps. Using the \SL-DRS as tubes, we formulated a \tube-based MPC method dubbed FIR-constrained SLTMPC and showed that the \SL-DRS tubes can be computed both online and offline. The proposed method enables concurrent computation of the nominal trajectory and the tubes and only requires a positively invariant terminal set. Finally, we showed the benefits and behavior of both the online and offline versions on three numerical examples.

\section*{Acknowledgment}
We would like to thank Carlo Alberto Pascucci from Embotech and Simon Muntwiler from ETH Zurich for the insightful discussions.
\begin{figure}
\centering
\includegraphics[width=0.97\linewidth]{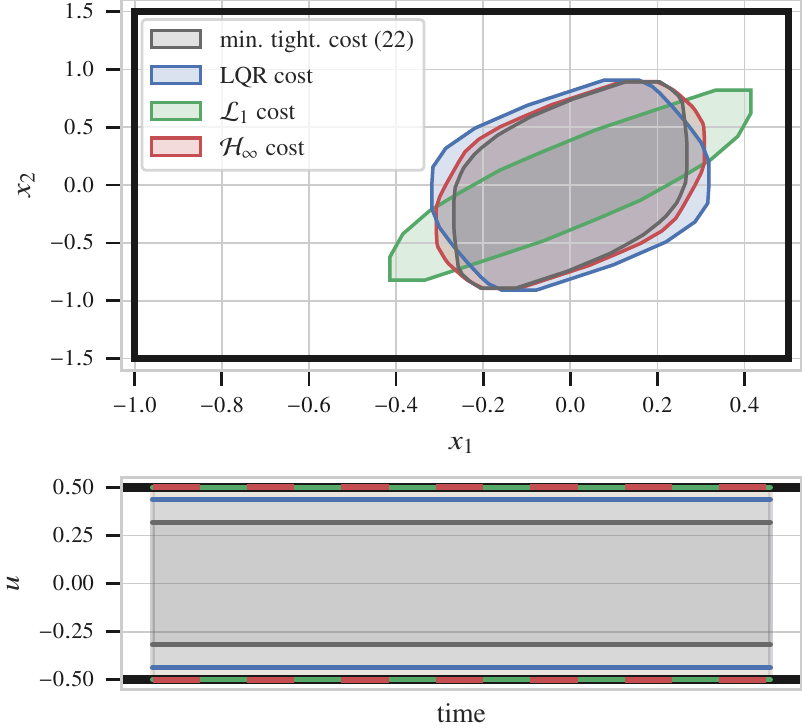}
\caption{Set~$\mathcal{F}_{e,N}\left(\bm{\Phi}_\mathbf{e} \right)$ of the state SL-DRS sequence for different cost functions~\eqref{offTubeSLTMPC:cost}~(top) and~$\mathcal{F}_{k,N}\left(\bm{\Phi}_\mathbf{k} \right)$ of the input SL-DRS sequence for the same cost functions~(bottom).}
\label{fig:SL-DRS-costs}
\end{figure}





\bibliography{bibliography.bib}

\end{document}